\journal{Arxiv}
\newcommand\hmatch{H}
\newcommand\vmatch{V}
\newcommand\xmatch{X}
\newcommand\vmove{\updownarrow}
\newcommand\hmove{\leftrightarrow}
\newcommand{\dmove}{
	\begin{turn}{45}
		\raisebox{-1ex}{$\leftrightarrow$}
	\end{turn}
}
\newcommand{\degree}{^\circ}
\newdefinition{definition}{Definition}
\newtheorem{theorem}{Theorem}
\newtheorem{lemma}[theorem]{Lemma}
\newdefinition{remark}{Remark}
\newdefinition{corollary}{Corollary}
\newproof{proof}{Proof}
\newproof{pot}{Proof of Theorem \ref{thm2}}
\newcommand{\Mod}[1]{\ (\mathrm{mod}\ #1)}
\begin{document}
\begin{frontmatter}
\title{Taming the Knight's Tour: \\Minimizing Turns and Crossings}

\author[1]{Juan Jose Besa\fnref{fn1}}
\ead{jjbesavi@uci.edu}

\author[1]{Timothy Johnson\fnref{fn1}}
\ead{timjohnson314@gmail.com}

\author[1]{Nil Mamano%
	\fnref{fn1}}
\ead{nmamano@uci.edu}

\author[1]{Martha~C.~Osegueda\corref{cor1}%
\fnref{fn1}}
\ead{mosegued@uci.edu}

\author[2]{Parker Williams}
\ead{paw2140@columbia.edu}

\address[1]{University of California, Irvine, Irvine, CA 92697, United States}
\address[2]{Columbia University, New York, NY 10027, United States}

\cortext[cor1]{Corresponding author}
\fntext[fn1]{The authors were supported by NSF Grant CCF-1616248 and NSF Grant 1815073.}

\begin{abstract}
We introduce two new metrics of ``simplicity'' for knight's tours: the number of turns and the number of crossings. We give a novel algorithm that produces tours with $9.25n+O(1)$ turns and $12n+O(1)$ crossings on an $n\times n$ board, and we show lower bounds of $(6-\epsilon)n$ and $4n-O(1)$ on the respective problems of minimizing these metrics. Hence, our algorithm achieves approximation ratios of $9.25/6+o(1)$ and $3+o(1)$. Our algorithm takes linear time and is fully parallelizable, i.e., the tour can be computed in $O(n^2/p)$ time using $p$ processors in the CREW PRAM model. We generalize our techniques to rectangular boards, high-dimensional boards, symmetric tours, odd boards with a missing corner, and tours for $(1,4)$-leapers. In doing so, we show that these extensions also admit a constant approximation ratio on the minimum number of turns, and on the number of crossings in most cases.
\end{abstract}
\begin{keyword}
Graph Drawing\sep Chess\sep Hamiltonian Cycle\sep Approximation Algorithms
\end{keyword}
\end{frontmatter}

\section{Introduction}
The game of chess is a fruitful source of mathematical puzzles. The puzzles often blend an appealing aesthetic with interesting and deep combinatorial properties~\cite{watkins2012across}.
An old and well-known problem is the knight's tour problem. A \textit{knight's tour} in a generalized $n\times m$ board is a path through all $nm$ cells such that any two consecutive cells are connected by a ``knight move'' (Figure~\ref{fig:knightmove}). For a historic treatment of the problem, see~\cite{rouse1974mathematical}.
A knight's tour is \textit{closed} if the last cell in the path is one knight move away from the first one. Otherwise, it is \textit{open}. This paper focuses solely on closed tours, so henceforth we omit the distinction. The knight's tour problem is a special case of the Hamiltonian cycle problem, in which  we find a simple cycle that visits all the nodes for a specific class of graphs. These graphs are formed by representing each cell on the board as a node and connecting cells a knight move apart. 

Existing work focuses on the questions of existence, counting, and construction algorithms. 

In general, the goal of existing algorithms is to find \textit{any} knight's tour. We propose two new metrics that capture simplicity and structure in a knight's tour. 

We associate each cell in the board with a point $(i,j)$ in the plane, where $i$ represents the row and $j$ represents the column.
\begin{figure}[t!]
	\centering
	\includegraphics[width=0.25\linewidth,page=2]{}
	\captionof{figure}{A knight moves two units on one axis and one on the other. 
	}
	\label{fig:knightmove}
\end{figure}

\newtheorem{problem}{Problem}
\begin{definition}[Turn]
	Given a knight's tour, a \emph{turn} is \emph{a triplet of consecutive cells with non-colinear coordinates.}
\end{definition}

\begin{problem}[Minimum turn knight's tour]\label{pro:turns}
	Given a rectangular $n\times m$ board such that a knight's tour exists, find a knight's tour minimizing the number of turns.
\end{problem}

\begin{definition}[Crossing]
	Given a knight's tour, a \emph{crossing} occurs \emph{when the two line segments corresponding to moves in the tour intersect}. E.g. if $\{c_1,c_2\}$ and $\{c_3,c_4\}$ are two distinct pairs of consecutive cells visited along the tour, a crossing happens if the open line segments $(c_1,c_2)$ and $(c_3,c_4)$ intersect.
\end{definition}

\begin{problem}[Minimum crossing knight's tour]\label{pro:crossings}
	Given a rectangular $n\times m$ board such that a knight's tour exists, find a knight's tour minimizing the number of crossings.
\end{problem}

Knight's tours are typically visualized by connecting consecutive cells by a line segment. Turns and crossings make the sequence harder to follow. Minimizing crossings is a central problem in \emph{graph drawing}, the sub-field of graph theory concerned with the intelligible visualization of graphs (e.g., see the survey in~\cite{herman2000graph}). Problem~\ref{pro:crossings} is the natural adaptation for knight's tours. Minimizing turns has many applications in computational geometry~\cite{Aggarwal97theangular,arkin2003,arkin2001}, where the number of turns in a polygonal line is sometimes referred to as \textit{link length}~\cite{meertens1994,COLLINS1998317}. Problem~\ref{pro:turns} asks for the (self-intersecting) polygon with the smallest number of vertices that represents a valid knight's tour. 

\subsection{Our contributions.}

We propose a novel algorithm for finding knight's tours with the following features.
\begin{itemize}
	\item $9.25n+O(1)$ turns and $12n+O(1)$ crossings on an $n\times n$ board.
	\item A $9.25/6+o(1)$ approximation factor on the minimum number of turns (Problem~\ref{pro:turns}).\footnote{By $o(1)$, we mean a function $f(n)$ such that for any constant $\epsilon>0$, there is an $n_0$ such that, for all $n\geq n_0$, $f(n)<\epsilon$.} The approximation factor is achieved by showing a $(6-\epsilon)n$ lower bound, for any $\epsilon>0$, on the number of turns of any knight's tour (Theorem~\ref{thm:lowerbound}).
	\item A $3+o(1)$ approximation factor on the minimum number of crossings (Problem~\ref{pro:crossings}). The approximation factor is achieved by showing a $4n-O(1)$ lower bound on the number of crossings of any knight's tour (Theorem~\ref{thm:lowerboundcrossings}).
	\item A $O(nm)$ run-time on an $n\times m$ board, i.e., linear on the number of cells, which is optimal.
	\item The algorithm is fully parallelizable: the tour can be computed in $O(nm/p)$ time with $O(p)$ processors in the CREW PRAM model (Theorem~\ref{thm:parallel}).
	\item It can be generalized to most typical variations of the problem: higher-dimensional cubical boards, rotationally symmetric tours, tours in boards with odd width and height that skip a corner cell, and tours for $(1,4)$-leapers, also known as \textit{giraffes}, which move one cell in one dimension and four in the other.
	\item The algorithm can be simulated by hand with ease. This is of particular interest in the context of recreational mathematics and mathematics outreach. 
\end{itemize}

The paper is organized as follows. Section~\ref{sec:related} gives an overview of the literature on the knight's tour problem and its variants. Section~\ref{sec:methods} describes the algorithm and its complexity analysis. We prove the approximation ratios in Section~\ref{sec:lower}. Section~\ref{a:extensions} shows the mentioned variations. We conclude in Section~\ref{sec:conclusions}. The tours produced by the algorithm can be visualized interactively for different board dimensions at~\url{http://nmamano.com/MinCrossingsKnightsTour}.

\section{Extended Related Work}\label{sec:related}
Despite being over a thousand years old~\cite{watkins2012across}, the knight's tour problem is still an active area of research. We review the key questions considered in the literature. 

\textbf{Existence.}
In rectangular boards, a tour exists as long as one dimension is even and the board size is large enough; no knight's tour exists for dimensions $1\times n$, $2\times n$ or $4\times n$, for any $n\geq 1$ and, additionally, none exist for dimensions $3\times 6$ or $3\times 8$~\cite{schwenk1991rectangular}. In three dimensions or higher, the situation is similar: a tour exists only if at least one dimension is even and the board is large enough~\cite{demaio2007chessboards,demaio2011prism,erde2012closed}. In the case of open knight's tours, a tour exists in two dimensions if both dimensions are at least $5$~\cite{cull1978knight,conrad1994solution}.

\textbf{Counting.}
 The number of closed knight's tours in an even-sized $n\times n$ board is at least $\Omega(1.35^{n^2})$ and at most $4^{n^2}$~\cite{kyek1997bounds}. The exact number of knight's tours in the standard $8\times 8$ board is $26,534,728,821,064$~\cite{mckay1997knight}. Algorithms for enumerating multiple~\cite{shufelt1993generating} or all~\cite{Alwan1992reentrant} knight's tours have also been studied.

\textbf{Algorithms.}
Historically, greedy algorithms have been popular for the knight's tour problem. The idea is to construct the tour in order, one step at a time, according to some heuristic. Warnsdorff's rule and its refinements~\cite{pohl1967method,Alwan1992reentrant,squirrel1996warnsdorff} work well in practice for small boards, but do not scale to larger boards~\cite{parberry1996scalability}. The basic idea is to choose the next node with fewest possible continuations, which is also useful for the Hamiltonian cycle problem~\cite{pohl1967method}.

To our knowledge, all efficient algorithms for arbitrary board sizes before this paper are based on a divide-and-conquer approach.
The tour is solved for a finite set of small, constant-size boards. Then, the board is covered by these smaller tours like a mosaic. The small tours are connected into a single one by swapping a few carefully chosen knight moves. This can be done in a bottom-up~\cite{schwenk1991rectangular,conrad1994solution,demaio2007chessboards,demaio2011prism,kamcev2014generalised} or a top-down recursive~\cite{parberry1997efficient,lin2005optimal} fashion. These processes are simple and can be done in time linear on the number of cells and, like ours, are also highly parallelizable~\cite{conrad1994solution,parberry1997efficient} since they are made of repeating patterns.

Divide-and-conquer is not suitable for finding tours with a small number of turns or crossings. Since each base solution has constant size, an $n\times n$ board is covered by $\Theta(n^2)$ of them, and each one contains turns and crossings. Thus, the divide-and-conquer approach necessarily results in $\Theta(n^2)$ turns and crossings. In contrast, our algorithm has $O(n)$ of each.

\textbf{Extensions.}
The above questions have been considered in related settings. Extensions can be classified into three categories, which may overlap:
\begin{itemize}
	\item \textbf{Tours with special properties.} Our work can be seen as searching for tours with special properties. \textit{Magic knight's tours} are also in this category: tours such that the indices of each cell in the tour form a magic square (see~\cite{beasley2012magic} for a survey). 
	
	The study of symmetry in knight's tours dates back at least to 1917~\cite{bergholt2001memoirs}. Symmetric tours under 90 degree rotations exist in $n\times n$ tours where $n\geq 6$ and $n$ is of the form $4k+2$ for some $k$~\cite{dejter1983symmetry}. Parberry extended the divide-and-conquer approach to produce tours symmetric under 90 degree rotations~\cite{parberry1997efficient}. Jelliss provided results on which rectangular board sizes can have which kinds of symmetry~\cite{jelliss1999symmetry}.
	
	Both of our proposed problems are new, but minimizing crossings is related to the \textit{uncrossed knight's tour problem}, which asks to find the longest sequence of knight moves without \textit{any} crossings~\cite{yarbrough1969uncrossed,jelliss1999intersecting,fischer2006new,kumar2008noncrossing}. This strict constraint results in incomplete tours.
	
\item \textbf{Board variations.} Besides higher dimensions, knight's tours have been considered in other boards, such as torus boards, where the top and bottom edges are connected, and the left and right edges are also connected. Any rectangular torus board has a closed tour~\cite{watkins1997torus}.
Another option is to consider boards with odd width and height. Since boards with an odd number of cells do not have tours, it is common to search for tours that skip a specific cell, such as a corner cell~\cite{parberry1997efficient}.

\item \textbf{Move variations.} An $(i,j)$\textit{-leaper} is a generalized knight that moves $i$ cells in one dimension and $j$ in the other~\cite{nash1959abelian}. The knight is a $(1,2)$-leaper. Knuth studied the existence of tours for general $(i,j)$-leapers in rectangular boards~\cite{knuth1994leaper}. 
	Tours for \textit{giraffes} ($(4,1)$-leapers) were provided in~\cite{dejter1983symmetry}
	 using a divide-and-conquer approach. Chia and Ong~\cite{chia2005generalized} studied which board sizes admit generalized $(a,b)$-leaper tours. Kamčev~\cite{kamcev2014generalised} showed that any board with sufficiently large and even size admits $(2,3)$\mbox{-,} $(2,5)$\mbox{-,} and $(a,1)$-leaper tours for any even $a$, and generalized this to any higher dimensions. Note that $a$ and $b$ are required to be coprime and not both odd, or no tour can exist~\cite{kamcev2014generalised}.
\end{itemize} 

\begin{figure}[b]
	\centering
	\includegraphics[width=0.99\linewidth,page=3]{figures}
	\caption{Quartet of knights moving in unison leaving no unvisited squares. In a straight move, the starting and ending position of the quartet overlap because two knights remain in place.}
	\label{fig:quartet}
\end{figure}

\section{The Algorithm}\label{sec:methods}
Given that one of the dimensions must be even for a tour to exist, we assume, without loss of generality, that the width $w$ of the board is even, while the height $h$ can be even or odd. We also assume that $w\geq 16$ and $h\geq 12$. The construction still works for some smaller sizes, but may require tweaks to its most general form described here.

\textbf{Quartet moves.}
What makes the knight's tour problem challenging is that knight jumps leave ``gaps''. Our first crucial observation is that a quartet of four knights arranged in a square $2\times 2$ formation can move ``like a king'': they can move horizontally, vertically, or diagonally without leaving any gaps (Figure~\ref{fig:quartet}).

By using the ``formation moves'' depicted in Figure~\ref{fig:quartet}, four knights can easily cover the board moving vertically and horizontally while remaining in formation. Of course, we need to traverse the entire board in a single cycle, not four paths. Thihs is solved with special structures placed in the bottom-left and top-right corners of the board, which we call \textit{junctions}, and which tie the four paths together to create a single cycle. Using only straight formation moves leads to tours with a large number of turns and crossings. Fortunately, two consecutive diagonal moves in the same direction introduces no turns or crossings, so our main idea is to use as many diagonal moves as possible. 
This leads to the general pattern in Figure~\ref{fig:mainconstr}.

\begin{figure}
	\centering
	\includegraphics[width=1\linewidth,page=4]{figures}
	\caption{The sequence of quartet moves \textit{(left)} and the resulting knight's tour \textit{(right)} in a $30\times 30$ board. Arrows illustrate sequences of repeated formation moves. Starting from the bottom-left square of the board, the single knight's tour follows the colored sections in order red, green, light red, purple, blue, orange, light blue, black, and back to red.}
	\label{fig:mainconstr}
\end{figure}

The full algorithm is Algorithm~\ref{alg:main}.
The formation starts at a junction at the bottom-left corner and ends at a junction at the top-right corner. To get from one to the other, it zigzags along an odd number of parallel diagonals, alternating between downward-right and upward-left directions. 
The junctions in Figure~\ref{fig:junctions} have a \textit{height}, which influences the number of diagonals traversed by the formation.

At the bottom-left corner, we use a junction with height 5. At the top-right corner, we use a junction with height between 5 and 8. Choosing the height as in Algorithm~\ref{alg:main} guarantees that, for any board dimensions, an odd number of diagonals fit between the two junctions.
Sequence $1$ in Figure~\ref{fig:othercorners}, which we call the \textit{heel}, is used to transition between diagonals along the top and bottom edges of the board.
The two non-junction corners require special sequences of quartet moves, as depicted in Figure~\ref{fig:othercorners}. Sequences $1,2,3,$ and $0$ are used when the last heel ends $0,2,4,$ and $6$ columns away from the vertical edge, respectively. These variations and the top-right junction are \textit{predictable} because they cycle as the board dimensions grow, so in Algorithm~\ref{alg:main} we give expressions for them in terms of $w$ and $h$.

\begin{algorithm}[tp!]
	\caption{Knight's tour algorithm for even width $w\geq 16$ and height $h\geq 12$.}
	\label{alg:main}
	\begin{algorithmic}
		\State 1. Fill the corners of the board as follows:
		\begin{description}
			\item[Bottom-left:] first junction in Figure~\ref{fig:junctions}.
			\item[Top-right:] junction of height $5+((w/2+h-1)\mbox{ mod }4)$ in Figure~\ref{fig:junctions} except the first one.
			\item[Bottom-right:] Sequence $(w/2+2)\mbox{ mod }4$ in Figure~\ref{fig:othercorners}.
			\item[Top-left:] Sequence $(3-h)\mbox{ mod }4$ in Figure~\ref{fig:othercorners} rotated 180 degrees.
		\end{description}
		\State 2. Connect the four corners using formation moves, by moving along diagonals from the bottom-left corner to the top-right corner as in Figure~\ref{fig:mainconstr}. To transition between diagonals:
		\begin{description}
			\item[Vertical edges:] use a double straight up move  (Figure~\ref{fig:quartet}).
			\item[Horizontal edges:] use Sequence 1 in Figure~\ref{fig:othercorners}.
		\end{description}
	\end{algorithmic}
\end{algorithm}

\begin{figure}[h!]
	\centering
	\includegraphics[width=0.99\linewidth,page=5]{figures}
	\caption{Junctions used in our construction.}
	\label{fig:junctions}
\end{figure}

\begin{figure}[h!]
	\centering
	\includegraphics[width=.8\linewidth,page=7]{figures}
	\caption{The four possible cases for the bottom-right corner.}
	\label{fig:othercorners}
\end{figure}

Figure~\ref{fig:sizes} shows the manifestation of these variations in different board dimensions. The sequence of quartet moves is easy to construct by hand. One starts from the bottom-left junction, which is always the same. One can then proceed in order. The other corners can be figured out as one encounters them. For the non-junction ones, one need only keep in mind that if the usual heel does not reach the right edge exactly, at least four columns are needed for the next transition. For instance, in Sequence $2$ in Figure~\ref{fig:othercorners}, the last heel is cut short to leave four columns instead of two. At the final corner, simply stop when there are between 5 and 8 rows left along the right edge, and plug in the corner of the appropriate size.

\begin{figure}
	\centering
	\includegraphics[width=0.99\linewidth,page=8]{figures}
	\caption{Traversal of the board by the knight quartet for different board dimensions.}
	\label{fig:sizes}
\end{figure}

\subsection{Correctness}\label{sec:correctness}
It is clear that the construction visits every cell, and that every node in the underlying graph of knight moves has degree two. However, it remains to be argued that the construction is actually a single closed cycle for any board dimensions. For this, we need to consider the choice of junctions.

A junction is a pair of disjoint knight paths whose four endpoints are adjacent as in the quartet formation. Thus, the paths in the bottom-left junction connect the four knights in two pairs. Denote the four knight positions in the formation by $tl, tr, bl, br$, where the first letter indicates top/bottom and the second left/right. We consider the three possible \textit{positional matchings} with respect to these positions: horizontal matching $\hmatch=(tl,tr), (bl,br)$, vertical matching $\vmatch=(tl,bl), (tr,br)$, and cross matching $\xmatch=(tl,br), (tr,bl)$. Let $\mathcal{M} = \{\hmatch, \vmatch, \xmatch\}$ denote the set of positional matchings.
We are interested in the effect of executing formation moves on the positional matching. A formation move does not change which knights are matched with which, but a non-diagonal move changes their positions, and thus their matchings. 
For instance, a horizontal matching becomes a cross matching after a straight move to the right. 

\begin{table}[b]
	\centering
	\begin{tabular}{c|cccccc}
		& $\dmove$ & $\vmove$ & $\hmove$ & $\vmove\hmove$ & $\hmove\vmove$ & $\vmove\hmove\vmove$ \\ \hline
		$\vmatch$	& $\vmatch$ & $\xmatch$ & $\vmatch$ & $\hmatch$ & $\xmatch$ & $\hmatch$ \\
		$\hmatch$		& $\hmatch$ & $\hmatch$ & $\xmatch$ & $\xmatch$ & $\vmatch$ & $\vmatch$ \\
		$\xmatch$	& $\xmatch$ & $\vmatch$ & $\hmatch$ & $\vmatch$ & $\hmatch$ & $\xmatch$
	\end{tabular}
	\caption {Result of applying each type of formation move, as well as three compositions of sequences of moves, to each formation matching.} \label{tab:perms}
\end{table}

\begin{table}[b]
	\centering
	\begin{tabular}{c|cccccc}
		& $\dmove$ & $\vmove$ & $\hmove$ & $\vmove\hmove$ & $\hmove\vmove$ & $\vmove\hmove\vmove$ \\ \hline
		$\dmove$	            & $\dmove$ & $\vmove$ & $\hmove$ & $\vmove\hmove$ & $\hmove\vmove$ & $\vmove\hmove\vmove$  \\
		$\vmove$	            & $\vmove$ & $\dmove$ & $\vmove\hmove$ & $\hmove$ & $\vmove\hmove\vmove$ & $\hmove\vmove$ \\
		$\hmove$	            & $\hmove$ & $\hmove\vmove$ & $\dmove$ & $\vmove\hmove\vmove$ & $\vmove$ & $\vmove\hmove$ \\
		$\vmove\hmove$	        & $\vmove\hmove$ & $\vmove\hmove\vmove$ & $\vmove$ & $\hmove\vmove$ & $\dmove$ & $\hmove$ \\
		$\hmove\vmove$	        & $\hmove\vmove$ & $\hmove$ & $\vmove\hmove\vmove$ & $\dmove$ & $\vmove\hmove$ & $\vmove$ \\
		$\vmove\hmove\vmove$	& $\vmove\hmove\vmove$ & $\vmove\hmove$ & $\hmove\vmove$ & $\vmove$ & $\hmove$ & $\dmove$
	\end{tabular}
	\caption {Cayley table for the group of positional matching permutations.}\label{tab:cayley}
\end{table}

It is easy to see that a straight move upwards or downwards has the same effect on the positional matching. Similarly for left and right straight moves. Thus, we classify the formation moves in Figure~\ref{fig:quartet} (excluding double straight moves, which are a composition of two straight moves) into vertical straight moves $\vmove$, horizontal straight moves $\hmove$, and diagonal moves $\dmove$. Let $\mathcal{S}=\{\vmove,\hmove,\dmove\}$ denote the three types of quartet moves. We see each move type $s\in\mathcal{S}$ as a function $s:\mathcal{M}\to\mathcal{M}$. The move types in $\mathcal{S}$, seen as functions are, in fact, permutations (see first three columns in Table~\ref{tab:perms}). For example, the diagonal move $\dmove$ is just the identity.
It follows that \textit{any} sequence of formation moves permutes the positional matchings, according to the composed permutation of each move in the sequence. The composed permutation of a sequence of moves $S=(s_1,\ldots,s_k)$, where each $s_i\in\mathcal{S}$, on a given positional matching $M\in\mathcal{M}$ is $S(M)=s_1\circ\cdots \circ s_k(M)$. There are six possible permutations of the three positional matchings, three of which correspond to the ``atomic'' formation moves $\dmove,\vmove,$ and $\hmove$. The other three permutations can be obtained by composing atomic moves, for instance, with the compositions $\vmove\hmove, \hmove\vmove,$ and $\vmove\hmove\vmove$ (Table~\ref{tab:perms}). Thus, any sequence of moves permutes the positional matchings in the same way as one of the sequences in the set $\{\dmove, \vmove, \hmove, \vmove\hmove, \hmove\vmove, \vmove\hmove\vmove\}$. This is equivalent to saying that this set, under the composition operation, is isomorphic to the symmetric group of degree three. 
Table~\ref{tab:cayley} shows the Cayley table of this group.

Let $T_{w,h}$ be the sequence of formation moves that goes from the bottom-left junction to the top-right one in Algorithm~\ref{alg:main} in a $w\times h$ board.

\begin{lemma}\label{lem:seq}
For any even $w\geq 16$ and any $h\geq 12$, $T_{w,h}(\hmatch)=\hmatch$.
\end{lemma}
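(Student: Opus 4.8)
The plan is to track the positional matching through the entire sequence $T_{w,h}$ by exploiting the group structure established in Table~\ref{tab:cayley}. Since the diagonal moves $\dmove$ act as the identity on $\mathcal{M}$, the only moves that matter are the straight moves, which occur exactly at the diagonal-to-diagonal transitions: a double straight up move along a vertical edge contributes $\vmove\vmove = \dmove$ (so vertical-edge transitions are irrelevant!), and the heel (Sequence $1$) along a horizontal edge contributes whatever permutation Sequence $1$ realizes. So the first concrete step is to compute, move by move, the permutation induced by one heel; I expect it to be a single straight move's worth of effect or some fixed element $\sigma \in S_3$, and similarly to compute the permutation induced by each of the corner/junction sequences (the bottom-right Sequences $0,1,2,3$, the rotated top-left sequences, and the non-first junctions of heights $5$ through $8$).

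Next I would count how many heels appear in $T_{w,h}$. The formation zigzags along an odd number of parallel diagonals; each time it reaches a horizontal edge (top or bottom) it performs a heel to switch diagonals, and each time it reaches a vertical edge it performs the (identity-acting) double straight up move. If there are $d$ diagonals traversed, there are $d-1$ turns between them. The key point emphasized in the text is that $d$ is \emph{odd} for every admissible $w,h$ — this is exactly why the junction height at the top-right is chosen as $5+((w/2+h-1)\bmod 4)$. So $d-1$ is even, meaning the heel permutation $\sigma$ gets composed with itself an even number of times. Since every element of $S_3$ has order dividing $6$, and in particular every transposition squares to the identity, $\sigma^{d-1} = \dmove$ provided $\sigma$ is an involution (which I anticipate the heel is, being essentially one straight move); even if $\sigma$ has order $3$, we would need $d-1 \equiv 0 \pmod 3$, so the hard part is confirming that the heel's permutation is an involution rather than a $3$-cycle.

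The remaining step is bookkeeping at the four corners: $T_{w,h}$ starts at the bottom-left junction (which produces the initial pairing but, being a junction rather than a sequence of formation moves, its effect is to \emph{define} $\hmatch$ as the starting matching, not to permute), then runs formation moves and heels through the bottom-right corner sequence, up the right side, and into the top-right junction. I would verify that the corner sequences and the heels along the bottom and top edges, composed in the order dictated by Algorithm~\ref{alg:main} and the case choices, always yield the identity on $\hmatch$. The case analysis splits on $w/2 + h - 1 \bmod 4$ (which selects both the top-right junction height and, correlatedly, the bottom-right and top-left sequences), so there are essentially four cases; in each, one checks that the accumulated permutation fixes $\hmatch$ using Table~\ref{tab:cayley}. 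I expect the main obstacle to be purely organizational — correctly reading off the permutation contributed by each figure (heel, junctions, corner sequences) and confirming the parity/divisibility bookkeeping lines up in all four residue classes — rather than any deep difficulty, since the group-theoretic machinery in the excerpt already reduces the problem to finite computation.
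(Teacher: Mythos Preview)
Your plan follows the paper's group-theoretic framework, but you are making it harder than necessary and your parity step is flawed.

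First, the heel is not merely an involution --- it is the \emph{identity}. Omitting diagonal moves, Sequence~1 is $\vmove\vmove\hmove\hmove\vmove\hmove\hmove\vmove$, and since $\vmove\vmove=\hmove\hmove=\dmove$ the whole thing collapses to $\dmove$. Once you actually carry out the computation you propose, the number of heels becomes irrelevant and the parity argument is unneeded. That is fortunate, because the parity argument is broken as stated: $d-1$ counts \emph{all} diagonal-to-diagonal transitions, but only the ones at horizontal edges are heels; the ones at vertical edges are the double-up moves you already dismissed as identity. So even with $d$ odd you cannot conclude that $\sigma$ is applied an even number of times.

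Second, the four-way case split is avoidable. The paper checks that every non-heel corner sequence (Sequences $0$, $3$, and each half of Sequence $2$) reduces to a single $\vmove$. Hence $T_{w,h}$ reduces to $\vmove^{k}$ for some $k\in\{0,\dots,4\}$, i.e.\ to $\dmove$ or $\vmove$. Now read Table~\ref{tab:perms}: $\vmove(\hmatch)=\hmatch$. Both possibilities fix $\hmatch$, so no residue-class analysis is needed. Finally, a small correction: the top-right junctions of heights $5$--$8$ are not sequences of formation moves and are not part of $T_{w,h}$, so there is nothing to compute for them.
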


\begin{proof}	
	We show that the entire sequence of moves $T_{w,h}$ is either neutral or equivalent to a single vertical move, depending on the board dimensions. According to Table~\ref{tab:perms}, this suffices to prove the lemma. 
	
	The sequence $T_{w,h}$ consists mostly of diagonal moves, which are neutral. The transition between diagonals along the vertical edges consist of two vertical moves, which are also neutral ($\vmove\vmove=\dmove$). The heel is also neutral, as it consists of the sequence $\vmove\vmove\hmove\hmove\vmove\hmove\hmove\vmove$ (omitting diagonal moves) which is again equivalent to $\dmove$. This is easily seen by noting that any two consecutive vertical or horizontal moves cancel out. For instance, Figure~\ref{fig:neutralheel} shows that if the positional formation is $H$ when entering the heel, then it is also $H$ when leaving the heel.
	Thus, $T_{w,h}$ can be reduced to the composition of the sequences in the bottom-right and top-left corners shown in Figure~\ref{fig:othercorners}. As mentioned, Sequence 1, the heel, is neutral. It is easy to see that the other sequences (counting each part of Sequence 2 separately) is equivalent to $\vmove$. Thus, we get that $T_{w,h}$ is simply the composition of zero to four vertical moves, depending on the width and height of the board. Since two consecutive vertical moves cancel out, this further simplifies to zero or one vertical moves.
\end{proof}

\begin{figure}
	\centering
	\includegraphics[width=.7\linewidth,page=9]{figures}
	\caption{Formation moves \textit{(left)} and individual knight moves \textit{(right)} in the heel.}
	\label{fig:neutralheel}
\end{figure}

\begin{theorem}[Correctness]\label{thm:correctness}
	Algorithm~\ref{alg:main} outputs a valid knight's tour in any board with even width $w\geq 16$ and with height $h\geq 12$.
\end{theorem}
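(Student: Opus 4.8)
The plan is to combine the structural observations already established into a single connectivity argument. The construction clearly visits every cell exactly once (the diagonals together with the corner sequences and junctions tile the board), and every node in the underlying knight-move graph has degree exactly two, so the output is a disjoint union of cycles. It therefore remains only to show that this union is a single cycle. I would organize this around the two junctions: the bottom-left junction and the top-right junction each contribute a pair of disjoint knight paths whose four endpoints sit in a $2\times 2$ quartet formation, and the zigzag of formation moves $T_{w,h}$ carries the bottom-left quartet to the top-right one. The key point is that the way the four ``tracks'' of the formation connect up at each end is governed entirely by a positional matching in $\mathcal{M}=\{\hmatch,\vmatch,\xmatch\}$.

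First I would fix the convention that the bottom-left junction realizes some matching $M_0$ and that, read in the orientation in which the formation arrives at the top-right corner, the top-right junction realizes some matching $M_1$, where $M_0,M_1$ are determined by the specific junctions chosen in Algorithm~\ref{alg:main}. By Lemma~\ref{lem:seq}, $T_{w,h}(\hmatch)=\hmatch$; more generally, since $T_{w,h}$ acts as a permutation of $\mathcal{M}$ (it is a composition of the permutations in Table~\ref{tab:perms}), the matching that the top-right junction ``sees'' from the formation side is $T_{w,h}(M_0)$. The four knight paths — the two from the bottom-left junction, the two from the top-right junction, and implicitly the four parallel tracks connecting them — then close up into cycles according to how the two matchings $T_{w,h}(M_0)$ and $M_1$ interact on the vertex set $\{tl,tr,bl,br\}$. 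The elementary combinatorial fact I would invoke is: the union of two perfect matchings on a $4$-element set is a single $4$-cycle if and only if the two matchings are distinct, and is two $2$-cycles if and only if they are equal. Hence the construction is a single closed tour precisely when $T_{w,h}(M_0)\neq M_1$.

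The remaining work is a finite case check: the bottom-left junction is always the first junction in Figure~\ref{fig:junctions}, so $M_0$ is a fixed matching; the top-right junction is one of four possibilities indexed by $(w/2+h-1)\bmod 4$, and likewise the bottom-right and top-left corner sequences (which feed into the value of $T_{w,h}$ via the analysis in the proof of Lemma~\ref{lem:seq}) cycle with period $4$ in $w/2$ and in $h$. So there are only finitely many residue combinations to examine. For each, I would compute $T_{w,h}$ as an element of the group in Table~\ref{tab:cayley} — using that the diagonals, the vertical-edge transitions, and the heel are all neutral, so that $T_{w,h}$ reduces to the composition of the bottom-right and top-left corner sequences, which by the proof of Lemma~\ref{lem:seq} is either $\dmove$ or $\vmove$ — apply it to $M_0$, and verify $T_{w,h}(M_0)\neq M_1$ in every case; Algorithm~\ref{alg:main}'s choice of junction height (forcing an odd number of diagonals) and of corner sequences is exactly what makes this hold uniformly.

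I expect the main obstacle to be purely bookkeeping rather than conceptual: pinning down the correct matching labels $M_0$ and $M_1$ for each junction in Figure~\ref{fig:junctions} in a consistent orientation, and tracking how the $180^\circ$ rotation of the top-left sequence and the arrival orientation at the top-right corner affect which of $\hmatch,\vmatch,\xmatch$ is relevant. Once the labels are fixed correctly, the fact that ``two distinct matchings on four points form one $4$-cycle'' does all the real work, and the verification collapses to checking a handful of entries against the Cayley table. A minor subtlety worth stating explicitly is that the two knight paths comprising a junction are internally connected (they are genuine paths, not just pairs of edge-stubs), so that closing up the four tracks with a $4$-cycle of endpoint-identifications yields one cycle through all cells, not one cycle plus leftover components inside a junction.
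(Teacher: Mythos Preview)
Your proposal is correct and follows essentially the same approach as the paper: reduce to the four quartet endpoints, observe that the two junctions contribute one matching each, and conclude single-cycle-ness from the fact that two distinct perfect matchings on four points form a single $4$-cycle. The paper phrases this via node contraction rather than ``union of matchings,'' but the content is identical.

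The one place you overcomplicate things is the proposed finite case check. In the paper there is no case analysis at all: the bottom-left junction is designed to realize the horizontal matching $\hmatch$, and \emph{all four} top-right junctions (regardless of height $5$--$8$) are designed to realize the vertical matching $\vmatch$. Since Lemma~\ref{lem:seq} gives $T_{w,h}(\hmatch)=\hmatch$ directly (and you correctly note that $T_{w,h}\in\{\dmove,\vmove\}$, both of which fix $\hmatch$), the inequality $T_{w,h}(M_0)=\hmatch\neq\vmatch=M_1$ holds uniformly with no residue-by-residue verification needed. So your ``bookkeeping obstacle'' dissolves once you simply read off $M_0=\hmatch$ and $M_1=\vmatch$ from the junction figures.
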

\begin{proof}
	Clearly, the formation moves in our construction yield a set of disjoint cycles in the underlying knight-move graph. We prove that they actually form a single cycle.
	Given a set of disjoint cycles in a graph, \textit{contracting} a node in one of the cycles is the process of removing it and connecting its two neighbors in the cycle. Contracting a node in a cycle of length $\geq 3$ does not change the number of cycles.
	Thus, consider the remaining graph if we contract all the nodes except the four endpoints of the top-right junction. 
	
	Since we use a vertical matching in the top-right junction, contracting the non-endpoint nodes inside the top-right junction leaves the two edges corresponding to the vertical matching. In the bottom-left junction we use a horizontal matching so, by Lemma~\ref{lem:seq}, contracting the nodes outside the top-right junction leaves the edges corresponding to a horizontal matching. Thus, the resulting graph is a single cycle of four nodes.
\end{proof}

The choice of matchings at the junctions is important; using a horizontal matching in the top-right junction would not result in a knight's tour.
On the other hand, the \textit{shapes} of the junctions are not important. We use the junctions in Figure~\ref{fig:junctions} because they fit nicely with the rest of the construction, but alternative shapes are possible. As an alternative, Figure~\ref{fig:altjunctions} shows a $6\times 6$ junction which can replace all the junctions in Figure~\ref{fig:junctions}.

\begin{figure}
	\centering
	\includegraphics[width=0.75\linewidth,page=6]{figures}
	\caption{\textbf{Left:} alternative bottom-left junction and its reflection over the $x=y$ axis. This junction and its reflection alone can be used instead of the junctions in Figure~\ref{fig:junctions}. Because the ``exit point'' of the junction is split evenly by the $x=y$ line, it can be reflected over that line to produce a vertical matching instead of a horizontal one. \textbf{Right:} instead of having a junction for each specific height, we can influence the number of diagonals made by the quartet formation with the first quartet move out of the junction. This requires some special quartet moves to cover the space to the side and above the junction. The figure illustrates one of the three cases.}
	\label{fig:altjunctions}
\end{figure}

\subsection{Parallel Implementation}
	 Algorithms~\ref{alg:main} only uses the width and the height of the board to determine the configuration on the corners. Consequently, we can determine the two knight moves incident on each cell in constant time: there is a constant number of cells near the corners; the moves in the first and last two columns are based on repeating vertical formation moves; the moves in the first and last four rows are based on the repeating heel pattern; and the remaining moves are based on the diagonal formation moves.
	 
	 Besides knowing the two knight moves incident on each cell, we can also find, in constant time, where the knight will be after a certain number of moves.
	  
\begin{lemma}\label{lem:parallel}
	Let $T$ be the tour produced by Algorithm~\ref{alg:main} on a board with even width $w\geq 16$ and height $h\geq 12$. Assuming that the knight starts at the bottom-left cell $(1,1)$, we can, for any $i$ between $1$ and $w\cdot h$, the $i$-th cell in $T$ using a constant number of arithmetic operations.
\end{lemma}

\begin{proof}
	For concreteness, we assume that $w \equiv 6 \Mod{8}$ and $h \equiv 2\Mod{4}$. According to Algorithm~\ref{alg:main}, this means that the bottom-right and top-left edges have a heel shape.
	This is one of a constant number of cases regarding the values of $w\mbox{ mod }8$ and $h\mbox{ mod } 4$. The other cases follow a similar argument, but with a slightly different case analysis.
	
	The tour $T$ is divided into 8 sections: the paths of the four formation knights and the 2 paths in each junction (each section has a different color in Figure~\ref{fig:mainconstr}). The path of each formation knight can be further divided into three sections that follow regular patterns (Figure~\ref{fig:parallelsections}): starting from the bottom-left corner, the first section follows formation moves between the left and bottom edges. The second section follows formation moves between the top and bottom edges or between the left and right edges (as in Figure~\ref{fig:mainconstr}), depending on the dimensions of the board. The third section follows formation moves between the top and right edges. The length of each of these sections is an arithmetic (quadratic) function of $w$ and $h$, so we can calculate their lengths and map the index $i$ to one of these sections.

\begin{figure}
	\centering
	\includegraphics[width=.6\linewidth,page=28]{figures}
	\caption{Division of each formation knight's path into three sections. Section $2$ is of type $(a)$ (left-to-right) or $(b)$ (top-to-bottom) depending on the dimensions of the board. Junctions are represented in gray.}
	\label{fig:parallelsections}
\end{figure}

	Once we know which section index $i$ belongs to, we can find the exact cell within the section. For concreteness, we assume that $i$ belongs to the section $S$ after the knight leaves the bottom-left junction for the first time and before the formation reaches the top-left or bottom-right corners (Section 1 in Figure~\ref{fig:parallelsections}). This corresponds to the black path in Figure~\ref{fig:parallel}. Again, the other sections can be analyzed using a similar argument.
	
	The index of the first cell in $S$ is $7$, since indices $1$ to $6$ correspond to cells inside the bottom-left junction (the red path in Figure~\ref{fig:parallel}).
	We call the cells of $S$ in column $1$ \textit{waymarks}. Figure~\ref{fig:parallel} shows that the index increment between two consecutive waymarks is $8$ more than between the previous pair. This is because each waymark is $4$ rows above the previous one, so the path between two waymarks has to go down and up $4$ more rows than the previous pair using diagonal formation moves, which only go up or down one row at a time.
	
	Given an index $i$ in $S$, the number of waymarks before or at index $i$ is $\lfloor \frac{\sqrt{i+2}-1}{2}\rfloor$. Hence, we can find the last waymark $x$ before (or at) index $i$, its row $r_x$, its index $i_x$, and the index offset $k$ since the last waymark, $k = i-i_x$. Given these values, we can find the cell at index $i$. We consider three cases: if $r_x-k\geq 4$ the cell is between the waymark and the heel, and its coordinates are $(r_x-k,1+2k)$; if $4>r_x-k\geq -1$ the cell is one of the five cells in the heel; otherwise, it is between the heel and the next waymark.
	
	For example, if index $85$ is in $S$, then, there are $\lfloor \frac{\sqrt{85+2}-1}{2}\rfloor=4$ waymarks before that index, the last one has index $79$, and its coordinates are $(19,1)$. The offset since the last waymark is $85-79=6$, so the cell at index $85$ is $(19-6, 1+12)=(13,13)$.
\end{proof}

\begin{figure}
	\centering
	\includegraphics[width=.9\linewidth,page=26]{figures}
	\caption{Bottom-left corner of a board showing the indices of the waymarks in Lemma~\ref{lem:parallel} and the increment between their indices in parentheses.}
	\label{fig:parallel}
\end{figure}

\begin{theorem}\label{thm:parallel}
	Algorithm~\ref{alg:main} can be computed in $O(wh/p)$ time using $p$ processors in the CREW PRAM model on a board with even width $w\geq 16$ and height $h\geq 12$, assuming arithmetic operations on numbers of the order of $w$ and $h$ take constant time.
\end{theorem}

\begin{proof}
	Lemma~\ref{lem:parallel} states that each index can be computed independently of the rest, so each processor can compute a section of the tour without any data dependencies between processors.
\end{proof}


\section{Approximation Ratios}\label{sec:lower}

In this section, we analyze the approximation ratio that our algorithm achieves for the problems of minimizing turns and crossings. For simplicity, we restrict the analysis to square boards.
First, we briefly discuss the classification of these problems in complexity classes.

\subsection{Computational Complexity}\label{a:comp}
Consider the following decision versions of the problems: is there a knight's tour on an $n\times n$ board with at most $k$ turns (resp. crossings)? 
We do not know if these problems are in $\mathsf{P}$. Furthermore, it may depend on how the input is encoded. Technically, the input consists of two numbers, $n$ and $k$, which can be encoded in $O(\log n+\log k)$ bits. However, it is more natural to do the analysis as a function of the board size (or, equivalently, of the underlying graph on which we are solving the Hamiltonian Cycle problem), that is, $\Theta(n^2)$.
It is plausible that the optimal number of turns (resp. crossings) is a simple arithmetic function of $n$. This would be the case if the optimal tour follows a predictable pattern like our construction, where counting the number of turns or crossings does not require recreating the tour. If this were the case, the problems would be in $\mathsf{P}$ regardless of the input's encoding.

If the input is represented using $\Theta(n^2)$ space, the problems are clearly in $\mathsf{NP}$, as a tour with $k$ turns (resp. crossings) acts as a certificate of polynomial length.

However, unless $\mathsf{P}=\mathsf{NP}$, the problems are not $\mathsf{NP}$-hard:
consider the language $\{1^n01^k\mid \mbox{ a tour exists with at most }k\mbox{ turns in an }n\times n\mbox{ board}\},$
and the analogous language for crossings. These languages are sparse, meaning that, for any given word length, there is a polynomial number of words of that length in the language. Mahaney's theorem states that if a sparse language is $\mathsf{NP}$-complete, then $\mathsf{P}=\mathsf{NP}$~\cite{MAHANEY1982130}. This suggests that the problems are in $\mathsf{P}$, though technically they could also be $\mathsf{NP}$-intermediate.

If the input is represented using $O(\log n+\log k)$ bits, then the problems are in $\mathsf{NEXP}$ because the versions above where the input takes $\Theta(n^2)$ bits are in $\mathsf{NP}$. In this setting, simply listing a tour would require time exponential on the input size.

\subsection{Upper Bounds}\label{sec:upperbounds}
We use our algorithm to find upper bounds on the optimal solutions to Problem~\ref{pro:turns} and Problem~\ref{pro:crossings}. All the turns and crossings in our construction happen near the edges. The four corners account for a constant number of each.
The left and right edges have 8 turns and 10 crossings for each four rows.
The heel from Figure~\ref{fig:neutralheel} has 22 turns and 32 crossings, so the top and bottom edges have that many turns and crossings for each eight columns.

We can reduce the number of turns and crossings slightly by replacing each heel with an alternative set of knight moves that cover the same cells and start and end with the four knights in the same place. In other words, we allow the four knights to temporarily ``break formation'' when they reach the heel and reassemble before leaving it in the next diagonal.

We did an exhaustive search for four knights that cover the squares shown in Figure~\ref{fig:optimalheel}, starting and ending in the two groups of four cells on the top-left. Of all the possible solutions, Figure~\ref{fig:optimalheel} shows two: one minimizing the number of turns and one minimizing the number of crossings. There is no single solution minimizing both. Turns and crossings caused by the path continuations outside the heel are also counted. In our search, the four knights are not required to end in any specific configuration, but the optimal solutions happen to end in the same configuration as the original heel.

\begin{figure}
	\centering
	\includegraphics[width=0.99\linewidth,page=25]{figures}
	\caption{\textbf{Left:} the heel resulting from formation moves. It has 22 turns and 32 crossings. \textbf{Center:} the optimal configuration for minimizing turns. It has 21 turns and 31 crossings. \textbf{Right:} the optimal configuration for minimizing crossings. It has 22 turns and 28 crossings. Crossings are marked with white disks.}
	\label{fig:optimalheel}
\end{figure}

Our construction with the optimized heels has $2\left(\frac{8}{4}+\frac{21}{8}\right)n + O(1)=9.25n+O(1)$ turns or $2\left(\frac{10}{4}+\frac{28}{8}\right)n + O(1)=12n+O(1)$ crossings.

\subsection{Lower Bounds}
We show a lower bound for the optimal solution to each problem.

\subsubsection{Number of Turns}
As a starting point, note that every edge cell (cells in the first or last row or column) \textit{must} contain a turn. This accounts for $4n-4$ turns. A simple argument, sketched in ~\ref{a:easylb}, improves this to a $4.25n-O(1)$ lower bound. Here we focus on the main result, a lower bound of $(6-\epsilon)n$ for any $\epsilon>0$. We start with some intermediate results.

We associate each cell in the board with a point $(i,j)$ in the plane, where $i$ is the row of the cell and $j$ is the column.
An edge cell only has four moves available. We call the directions of these moves $D_1,D_2,D_3,$ and $D_4$, in clockwise order. For an edge cell $c$, let $r_i(c)$, with $1\leq i\leq 4$, denote the ray starting at $c$ and in direction $D_i$. That is, the ray that passes through the cells reachable from $c$ by moving along $D_i$.

Let $a$ and $b$ be two cells along the left edge of the board, with $a$ above $b$. The discussion is symmetric for the other three edges. Given two intersecting rays $r$ and $r'$, one starting from $a$ and one from $b$, let $S(r,r')$ denote the set of cells in the region of the board \textit{bounded} by $r$ and $r'$: the set of cells below or on $r$ and above or on $r'$.
We define the \textit{crown} of $a$ and $b$ as the following set of cells (Figure~\ref{fig:crown}):
$$\mbox{crown}(a,b)=S(r_2(a),r_1(b))\cup S(r_3(a),r_2(b))\cup S(r_4(a),r_3(b)).$$

\begin{figure}
	\centering
	\begin{minipage}{.6\textwidth}
	\centering
\includegraphics[width=0.73\linewidth,page=10]{figures}
\caption{Terminology used in the lower bound. Note that $c$ is a clean cell with respect to the crown of $a$ and $b$ because both of its legs escape the crown.}
\label{fig:crown}
	\end{minipage}%
\hfill
	\begin{minipage}{.35\textwidth}
	\centering
\includegraphics[width=0.5\linewidth,page=11]{figures}
\caption{The black leg would collide with any of the red legs.}
\label{fig:collisions}
	\end{minipage}
\end{figure}

We associate with each edge cell $c$ the two maximal sequences of moves without turns in the tour that have $c$ as an endpoint. We call them the \textit{legs} of $c$. We say that the legs of $c$ begin at $c$ and end at their other endpoint. We say two legs of different edge cells \textit{collide} if they end at the same cell.
Let $C_{a,b}$ denote the set of edge cells along the left edge between $a$ and $b$ ($a$ and $b$ included). The following is easy to see.

\begin{remark}
Any collision between the legs of edge cells in $C_{a,b}$ happens inside $\mbox{crown}(a,b)$.
\end{remark}

We say that a leg of an edge cell in $C_{a,b}$ \textit{escapes} the crown of $a$ and $b$ if it ends outside the crown.
We say an edge cell in $C_{a,b}$ is \textit{clean}, with respect to $C_{a,b}$, if both of its legs escape. 

\begin{lemma}
	Let $m=|C_{a,b}|$ and $k$ be the number of clean cells in $C_{a,b}$.
	The number of turns inside $\mbox{crown}(a,b)$ is \textit{at least} $m+(m-k)/2$.
\end{lemma}

\begin{proof}
Each of the $m$ edge cells is one turn. Each of the $m-k$ non-clean cells have a leg that ends in a turn inside the crown. This turn may be because it collided with the leg of another edge cell in the crown. Thus, there is at least one turn for each two non-clean edge cells.
\end{proof}

\begin{lemma}
Let $a,b$ be two cells along the left edge of the board, with $a$ above $b$.
There are at most 122 clean cells inside $\mbox{crown}(a,b)$.
\end{lemma}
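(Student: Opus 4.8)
The plan is to show that a clean cell $c \in C_{a,b}$ forces its two legs to be very long, and that legs this long are geometrically constrained enough that only $O(1)$ of them can coexist. First I would set up coordinates so that the left edge is the line $j=1$; for a cell $c = (i,1)$ on the left edge, the four move directions $D_1, D_2, D_3, D_4$ are the four knight moves that stay on or point into the board, namely (in clockwise order) up-and-right with slopes $\pm 1/2$ and $\pm 2$. A leg of $c$ is a maximal straight run of knight moves in one of these four directions. Since $c$ is clean, both its legs must leave the crown before turning; because the crown is the union of the three bounded regions $S(r_2(a),r_1(b))$, $S(r_3(a),r_2(b))$, $S(r_4(a),r_3(b))$, a leg that escapes must cross the ``outer'' boundary ray emanating from $a$ or from $b$.

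The key step is a collision/packing argument. Consider the legs of the clean cells that travel in, say, direction $D_2$ (slope roughly $2$, steeply up-right). Two distinct parallel legs in the same direction lie on distinct parallel lines of that slope; I would argue that if a $D_2$-leg of a clean cell $c$ is long enough to escape the crown, then it must pass through a bounded ``bottleneck'' region near one of the corners of the crown (near where $r_2(a)$ meets the relevant boundary), and two such legs on nearby parallel lines would have to collide — end at a common cell — contradicting cleanliness, or else would have to re-enter and create a turn. More carefully, I would count lattice lines of each of the four slopes that can simultaneously carry an escaping leg: these lines are pinned between the boundary rays of the crown near $a$ (for legs escaping ``upward'') or near $b$ (for legs escaping ``downward''), and only a bounded number of parallel lattice lines fit in that pinned strip. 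Figure~\ref{fig:collisions} already illustrates exactly this phenomenon — one leg colliding with a whole family of nearly-parallel legs — so the argument is to turn that picture into a counting bound: at most some constant number of clean cells can have a $D_i$-leg escaping through a given corner of the crown, and there are four directions and a bounded number of corners, giving the constant $122$ after bookkeeping.

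I would organize the steps as follows. (1) Characterize the crown's boundary explicitly: it is bounded by $r_2(a)$ on top, by $r_1(b)$ (and pieces of $r_2(b), r_3(b)$) on the bottom, and the three constituent regions meet along $r_3(a)$ and $r_4(a)$. (2) Show that for a clean cell $c$, each leg must cross out through $r_2(a)$, $r_1(b)$, $r_4(a)$, or the far vertical edge, and in particular must have length bounded below by a linear function of $\mathrm{dist}(c, \{a,b\})$ unless $c$ is within constant distance of $a$ or $b$ — this handles all but $O(1)$ cells near the endpoints. (3) For the remaining ``deep'' clean cells, pair up cells whose escaping legs run along lines of the same slope and derive that such lines are confined to a constant-width strip (bounded by the pair of boundary rays they must thread between), hence there are only $O(1)$ of them; since each clean cell contributes legs in at most two of the four directions, summing over directions and over the top/bottom escape options gives the bound. (4) Add back the $O(1)$ near-endpoint cells and check the arithmetic to land at $122$.

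The main obstacle I anticipate is step (3): making rigorous the claim that an escaping leg is trapped in a bounded-width strip. The subtlety is that a leg could in principle escape the crown ``sideways'' through the far parts of the $r_3(b)$ or $r_4(a)$ boundary rather than near a corner, so I would need a careful case analysis of which boundary piece each of the four leg-directions can exit through, using the specific slopes $\pm 1/2, \pm 2$ of knight moves together with the fact (from the Remark) that all collisions among $C_{a,b}$-legs occur inside the crown. A secondary annoyance is bounding the explicit constant: getting $122$ rather than merely $O(1)$ requires tracking the width of each strip, the number of corners, and the near-endpoint slack, so I expect the final calculation to be a somewhat tedious but routine enumeration once the structural claims in steps (1)–(3) are in place.
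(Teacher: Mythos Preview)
Your step~(3) is where the proposal breaks down, and the paper's actual argument is quite different from the bottleneck picture you sketch. The claim that escaping legs in a fixed direction are confined to a constant-width strip is false: the $D_1$-legs of the cells in $C_{a,b}$ lie on $|C_{a,b}|$ distinct parallel lines (one per cell), filling a strip whose width is proportional to $|C_{a,b}|$, not to any constant. Each such leg escapes the crown by crossing $r_2(a)$, and the crossing points spread out linearly along $r_2(a)$ rather than clustering near a single corner, so no packing bound follows. (Note also that parallel legs can never collide in the paper's sense---to collide means to \emph{end} at a common cell---so ``two such legs on nearby parallel lines would have to collide'' cannot be right as stated.)

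The paper instead uses the collision picture of Figure~\ref{fig:collisions} arithmetically rather than geometrically. The key observation is that a $D_1$-leg from a cell $d$ passes through the same lattice cell as a $D_2$-leg starting $3\ell$ rows above $d$, a $D_3$-leg starting $5\ell$ rows above $d$, and a $D_4$-leg starting $4\ell$ rows above $d$. Hence if two clean cells $c,d$ both carry a $D_1$-leg and their rows differ by a multiple of $\mathrm{lcm}(3,4,5)=60$, then the \emph{other} leg of $c$ (necessarily in direction $D_2$, $D_3$, or $D_4$) collides with the $D_1$-leg of $d$; since such a collision lies inside the crown, this contradicts cleanliness. Pigeonhole on row residues modulo $60$ then gives at most $60$ clean cells with a $D_1$-leg, by symmetry at most $60$ with a $D_4$-leg, and a short separate argument gives at most $2$ whose legs lie in $\{D_2,D_3\}$, for a total of $60+60+2=122$. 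Your instinct that Figure~\ref{fig:collisions} is the heart of the matter was correct; what was missing is that it encodes a divisibility condition on row separations, not a geometric bottleneck.
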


\begin{proof}
First we show that there are at most 60 clean cells such that one of their legs goes in direction $D_1$.
For the sake of contradiction, assume that there are at least 61. Then, there are two, $c$ and $d$, such that $c$ is $60r$ rows above $d$, for some $r\in\mathbb{N},r\geq 1$. The contradiction follows from the fact that the other leg of $c$, which goes along $D_2,D_3,$ or $D_4$, would collide with the leg of $b$ along $D_1$. This is because, for any $l\geq 1$, the leg of $b$ along $D_1$ collides with (Figure~\ref{fig:collisions}):
\begin{itemize}
	\item any leg along $D_2$ starting from a cell $3l$ rows above $b$,
	\item any leg along $D_3$ starting from a cell $5l$ rows above $b$, and
	\item any leg along $D_4$ starting from a cell $4l$ rows above $b$.
\end{itemize}
Since $60r$ is a multiple of $3,4,$ and $5$, no matter what direction the other leg of $c$ goes, it collides with the leg of $d$.
As observed, this collision happens inside the crown. Thus, $c$ and $d$ are not clean.
By a symmetric argument, there are at most 60 clean cells such that one of their legs goes in direction $D_4$.

Finally, note that there can only be two clean cells with legs in $D_2$ \textit{and} $D_3$. By a similar argument, there cannot be two such cells at an even distance of each other; the leg along $D_3$ of the top one would collide with the leg along $D_2$ of the bottom one.  
\end{proof}

\begin{figure}
	\centering
	\begin{minipage}{.48\textwidth}
	\centering
	\includegraphics[width=.99\linewidth,
	page=12]{figures}
	\caption{Each sector of the square shows the process after a different number of iterations: $1,2,3,$ and $4$ iterations on the top, right, bottom, and left sectors, respectively.}
	\label{fig:fractal}
	\end{minipage}%
\hfill
	\begin{minipage}{.48\textwidth}
\centering
	\includegraphics[width=.99\linewidth,
	page=13]{figures}
	\caption{Lower bounds on two ratios. \textbf{Left:} the ratio between the gap between consecutive crowns and the base of the maximum-size crown that fits in the gap is $>0.4$. \textbf{Right:} the ratio between the gap between a crown and a main diagonal and the base of the maximum-size crown that fits in the gap is~$>0.36$.}
	\label{fig:geometry}
	\end{minipage}
\end{figure}

\begin{corollary}\label{cor:122}
Suppose that the crown of $a$ and $b$ has $m\geq 122$ edge cells. Then, there are at least $(m-122)/2$ turns inside the crown at non-edge cells.
\end{corollary}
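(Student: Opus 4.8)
The plan is to read this off immediately from the preceding Lemma together with the Remark that lower bounds the number of turns inside a crown. Let $k$ denote the number of clean cells in $C_{a,b}$. The Lemma gives $k\le 122$, and since by hypothesis $m\ge 122$ we have $m-k\ge m-122\ge 0$. Plugging this into the Remark, the crown of $a$ and $b$ contains at least $m+(m-k)/2\ge m+(m-122)/2$ turns in total.

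The next step is to split these turns into those at edge cells and those at non-edge cells. Each of the $m$ cells of $C_{a,b}$ is itself a turn (every cell adjacent to an edge is a turn), so these account for exactly $m$ of the turns counted above. The remaining at least $(m-122)/2$ turns were produced, in the proof of the Remark, as the terminal cells of the legs of the $m-k$ non-clean cells of $C_{a,b}$. The point to verify is that these terminal turns occur at non-edge cells and are not among the $m$ already counted. For this I would observe that a leg emanating from a cell of the left edge is a maximal run of identical knight moves, and every knight move changes the column coordinate, so such a leg can never return to the left edge; hence its terminal turn is neither in $C_{a,b}$ nor on the left edge at all. It then remains to note that for the crowns used here the interior meets no cell of the top or bottom edge — or, if $a$ or $b$ is within bounded distance of a corner, at most a constant number of such cells, which can be absorbed — so the only edge cells inside the crown are those of $C_{a,b}$. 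Combining the two counts yields at least $(m-122)/2$ turns at non-edge cells inside the crown.

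The only real obstacle is this last piece of bookkeeping: confirming that the $(m-122)/2$ collision-induced turns are disjoint from the $m$ forced edge-cell turns and genuinely land at interior cells. Everything else is a one-line substitution of the bound $k\le 122$ from the Lemma into the turn count of the Remark.
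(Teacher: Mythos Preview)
Your proposal is correct and follows exactly the approach the paper intends: plug the Lemma's bound $k\le 122$ into the Remark's count $m+(m-k)/2$, then subtract the $m$ forced turns at the edge cells of $C_{a,b}$. The paper does not spell out a proof of this corollary at all, treating it as immediate; your extra paragraph checking that the leg-terminal turns genuinely fall at non-edge cells (because every knight move from a left-edge cell increases the column index, so a leg cannot terminate back on the left edge) is more care than the paper itself takes, but it is correct and does no harm.
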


Consider the iterative process depicted in Figure~\ref{fig:fractal} defined over the unit square. The square is divided in four sectors along its main diagonals. Whereas earlier we used the term `crown' to denote a set of cells, here we use it to denote the polygon with the \textit{shape} of a crown. On the first step, a maximum-size crown is placed on each sector. On step $i>1$, we place $2^{i-1}$ more crowns in each sector. They are maximum-size crowns, subject to being disjoint from previous crowns, in each gap between previous crowns and between the crowns closest to the corners and the main diagonals.

\begin{lemma}\label{lem:fractal}
For any $\epsilon$ with $0<\epsilon<1$, there exists an $i\in\mathbb{N}$ such that at least $(1-\epsilon)$ of the boundary of the unit square is inside a crown after $i$ iterations of the process.
\end{lemma}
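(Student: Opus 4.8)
The plan is to track the fraction of the boundary \emph{not} yet covered after $i$ iterations and show it tends to $0$. Call this quantity $g_i$. By symmetry it suffices to analyze a single sector, say the top sector, whose relevant boundary is one side of the unit square. After iteration $1$ we place one maximum-size crown in that sector; it covers some fixed fraction of the side, leaving a collection of uncovered sub-intervals (the gaps between the crown and the two main diagonals). The crucial structural observation is that the configuration of gaps is \emph{self-similar}: each gap, together with the portion of the boundary it subtends, looks like a scaled-down copy of the original sector's situation, because a crown is determined (up to the discretization, which we are ignoring in this continuous idealization) by the two edge cells $a,b$ that define it, and the region between two consecutive crowns is again bounded by rays of the same slopes. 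Hence the subdivision rule applied inside each gap removes the \emph{same} fixed fraction of that gap's boundary length.

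Concretely, I would argue that there is a constant $\rho<1$, depending only on the geometry of a maximum-size crown relative to the sector it sits in, such that after one more iteration the total uncovered boundary length is at most $\rho$ times what it was: $g_{i+1}\le \rho\, g_i$. The point is that step $i+1$ places, in \emph{every} current gap, a maximum-size crown subject to disjointness from the already-placed crowns, and by the self-similarity just described this crown eats a fixed proportion of that gap. Summing over all $2^{i-1}$ gaps per sector (and over the four sectors) gives the multiplicative decay. Iterating, $g_i\le \rho^{\,i-1} g_1 \le \rho^{\,i-1}$, which is below $\varepsilon$ once $i > 1 + \log(1/\varepsilon)/\log(1/\rho)$. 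Choosing such an $i$ proves the lemma, since then at least $(1-\varepsilon)$ of the boundary lies inside some crown.

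The main obstacle is making the self-similarity and the uniform constant $\rho$ precise. Two points need care. First, a ``maximum-size crown'' inside a gap is constrained both by the gap's endpoints on the boundary and by disjointness from the neighboring crowns already placed; one has to check that these constraints scale correctly, i.e.\ that a gap of boundary length $\ell$ admits a crown covering at least a fixed fraction of $\ell$ regardless of where the gap sits. This follows because the gaps are bounded by rays $r_i(a), r_{i-1}(b)$ of fixed slopes $D_1,\dots,D_4$, so every gap region is affinely equivalent to one of a bounded number of shapes, each of which clearly contains a crown of positive relative size. Second, there is a genuine discretization issue that the continuous picture suppresses: on an actual $n\times n$ board, once gaps become $O(1)$ in size the recursion stalls. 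But the statement is about the \emph{idealized} process on the unit square (the ``polygon with the shape of a crown''), so this is not an obstruction to the lemma as stated; it only matters when the lemma is later applied to a finite board, where one stops the recursion at the scale where gaps have constant length and absorbs the leftover into the $O(1)$ terms. I would therefore state and prove the lemma purely in the continuous setting, with $\rho$ an explicit (if unoptimized) constant extracted from the worst-case gap shape, and defer the board-discretization bookkeeping to wherever Corollary~\ref{cor:122} is combined with this lemma to yield the $(6-\varepsilon)n$ bound.
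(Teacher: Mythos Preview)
Your approach is essentially the paper's: show that a fixed positive fraction of each remaining gap is covered at every step, giving geometric decay $g_{i+1}\le\rho\,g_i$ of the uncovered boundary length. The paper makes this concrete by observing that there are exactly two gap shapes---between consecutive crowns (covered fraction $>0.4$) and between a crown and a main diagonal (covered fraction $>0.36$)---and hence takes $\rho=0.64$; your initial claim that each gap is a scaled copy of the \emph{original sector} is not literally true (the sector is bounded by slope-$\pm 1$ diagonals, whereas crown edges have knight-move slopes $\pm 2,\pm\tfrac12$), but your own correction to ``a bounded number of gap shapes, each admitting a crown of positive relative size'' is exactly the right statement and matches the paper.
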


\begin{proof}
At each iteration $i>1$, a constant fraction larger than $0.36$ of the length on each side that is not in a crown is added to a new crown (Figure~\ref{fig:geometry}). This gives rise to a series $A_i$ for the fraction of the side inside crowns after $i$ iterations: $A_1=1/3$, $A_{i+1}> A_{i}+0.36(1-A_{i})$ for $i>1$; this series converges to $1$ (we do not prove that $A_1 = 1/3$, but this is inconsequential
because the value of $A_1$ does not affect the convergence of the series).
\end{proof}

\begin{theorem}[Turns lower bound]\label{thm:lowerbound}
	For any constant $\epsilon>0$, there is a sufficiently large $n$ such that any knight's tour on an $n\times n$ board has at least $(6-\epsilon)n$ turns.
\end{theorem}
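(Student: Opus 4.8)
The plan is to combine the geometric covering result of Lemma~\ref{lem:fractal} with the counting bound of Corollary~\ref{cor:122}, carefully bridging the gap between the continuous "crown-shaped polygons" in the unit square and the discrete crowns of edge cells on an $n \times n$ board. Fix $\varepsilon > 0$. First I would apply Lemma~\ref{lem:fractal} with parameter $\varepsilon/4$ (say) to obtain a fixed number of iterations $i$ and hence a fixed collection of at most $4 \cdot 2^i$ disjoint crown-shaped polygons in the unit square whose union covers at least a $(1-\varepsilon/4)$-fraction of the boundary of the square.

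Next I would scale this configuration up to the $n \times n$ board: multiply all coordinates by $n$, so each crown-shaped polygon becomes a region of the board. For each such scaled polygon I would select a pair of edge cells $a, b$ on the appropriate side so that $\operatorname{crown}(a,b)$ (the cell set) is contained in — and closely approximates — that polygon; the number of edge cells $m$ in each such crown is $\Theta(n)$ up to lower-order terms, and the total number of edge cells covered across all chosen crowns is at least $(1-\varepsilon/2)\cdot 4n$ once $n$ is large enough to absorb the $O(2^i)$ boundary-rounding losses (which is fine since $i$ is a constant depending only on $\varepsilon$). Because the polygons are pairwise disjoint, the cell-crowns are pairwise disjoint, so turns counted inside different crowns are distinct. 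Now apply Corollary~\ref{cor:122} to each crown with $m_j$ edge cells: it contributes at least $(m_j - 122)/2$ turns at non-edge cells strictly inside it. Summing, the non-edge turns number at least $\sum_j (m_j - 122)/2 \geq \frac{1}{2}\bigl((1-\varepsilon/2)\cdot 4n - 122 \cdot 4\cdot 2^i\bigr) \geq (2 - \varepsilon)n$ for $n$ large. On top of this, every edge cell is itself a turn, giving another $4n - O(1)$ turns, and these are disjoint from the interior turns just counted. Adding the two contributions yields $(6 - \varepsilon')n$ for a suitable $\varepsilon'$, and since $\varepsilon$ was arbitrary this proves the theorem.

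I would be careful about one subtlety: Corollary~\ref{cor:122} and the preceding remark only guarantee turns, but I must ensure the $4n - O(1)$ edge turns are not double-counted with the $\sum (m_j-122)/2$ interior turns — this is immediate since the corollary explicitly counts turns "at non-edge cells." A second point to verify is that the scaled crown-shaped polygons really do contain genuine cell-crowns of the claimed size: one needs the maximal crown of two edge cells at vertical distance $d$ to have its bounding rays with the correct slopes ($D_1$ through $D_4$ have slopes determined by knight moves), so the polygon shape used in the geometric argument of Lemma~\ref{lem:fractal} must match these slopes. I would confirm that the "crown polygon" in Figure~\ref{fig:fractal} and Figure~\ref{fig:geometry} was in fact defined using exactly the $D_i$ ray directions, so the $>0.36$ and $>0.4$ ratio bounds transfer directly; this consistency is what makes the continuous-to-discrete passage lossless up to $O(1)$ per crown.

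The main obstacle is the bookkeeping in this continuous-to-discrete transfer: I need the total edge-cell count over all chosen crowns to be $(1 - o(1)) \cdot 4n$, which requires (a) the geometric fraction $(1-\varepsilon/4)$ from Lemma~\ref{lem:fractal}, (b) control of the rounding error when a real interval of the boundary is replaced by the integer edge cells it contains — at most $O(1)$ cells lost per crown endpoint, hence $O(2^i) = O_\varepsilon(1)$ total — and (c) making sure crowns that the geometric process places near the main diagonals or near corners still correspond to legitimate cell-crowns $\operatorname{crown}(a,b)$ with $a$ strictly above $b$ on a common side. Once these are pinned down, the arithmetic collapses to the clean bound $4n + 2n - \varepsilon n = (6-\varepsilon)n$, with the "$4n$" coming from edge cells and the "$2n$" being half of the $(1-o(1))4n$ edge cells across disjoint crowns via Corollary~\ref{cor:122}.
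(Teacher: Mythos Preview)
Your proposal is correct and follows essentially the same route as the paper: fix the crown configuration from Lemma~\ref{lem:fractal} at a level depending only on $\varepsilon$, scale it onto the $n\times n$ board, apply Corollary~\ref{cor:122} to each of the constantly many disjoint crowns to get roughly $2n$ turns at non-edge cells, and add the $4n-O(1)$ turns at edge cells. The paper's own argument is slightly terser about the continuous-to-discrete passage (it simply overlays the shapes and takes $n$ large enough that every crown contains more than $122$ edge cells), whereas you spell out the rounding and slope-matching concerns explicitly; but the structure and the arithmetic are the same.
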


\begin{proof}
We show a seemingly weaker form of the claim: that there is a sufficiently large $n$ such that any knight's tour on an $n\times n$ board requires $(6-2\epsilon)n-C_\epsilon$ turns, where $C_\epsilon$ is a constant that depends on $\epsilon$ but not on $n$. This weaker form is in fact equivalent because, for sufficiently large $n$, $C_\epsilon<\epsilon n$, and hence $(6-2\epsilon)n-C_\epsilon>(6-3\epsilon)n$. Thus, the claim is equivalent up to a multiplicative factor in $\epsilon$, but note that it is a claim about arbitrarily small $\epsilon$, so it is not affected by a multiplicative factor loss.

Let $i$ be the smallest number of iterations of the iterative process in Figure~\ref{fig:fractal} such that at least $(1-\epsilon)$ of the boundary of the unit square is inside crown shapes. The number $i$ exists by Lemma~\ref{lem:fractal}. Fix $S$ to be the corresponding set of crown shapes, and let $r=|S|$. Note that $r=4(2^i-1)$ is a constant that depends only on $\epsilon$. Now, consider a square board with the crown shapes in $S$ overlaid in top of them. Let the board length $n$ be such that the smallest crown in $S$ contains more than $122$ edge cells.
Then, by Corollary~\ref{cor:122}, adding up the turns at non-edge cells over all the crowns in $S$, we get at least $4n(1-\epsilon)/2-61r$ turns. Adding the $4n-4$ turns at edge cells, we get that the total number of turns is at least $(6-2\epsilon)n-61r-4$. To complete the proof, set $C_\epsilon=61r+4$.
\end{proof}

\begin{corollary}
Algorithm~\ref{alg:main} achieves a $9.25/6+o(1)$ approximation on the minimum number of turns.
\end{corollary}

\begin{proof}
In an $n\times n$ board, let $ALG(n)$ denote the number of turns in the tour produced by Algorithm~\ref{alg:main} (with the optimized heels from  Section~\ref{sec:upperbounds}), and $OPT(n)$ denote the minimum number of turns.
Let $\epsilon>0$ be an arbitrarily small constant. We show that an $n_0$ exists such that $\forall n\geq n_0$, $ALG(n)/OPT(n)<9.25/6+\epsilon$. As mentioned in Section~\ref{sec:upperbounds}, for any even $n\geq 16$, $ALG(n)<9.25n+c$ for some small constant $c$.
In addition, by Theorem~\ref{thm:lowerbound}, for sufficiently large $n$, $OPT(n)>(6-\epsilon)n$. Thus, $ALG(n)/OPT(n)<(9.25n+c)/(n(6-\epsilon))$
Furthermore, for large enough $n$, $c/((6-\epsilon)n)<\epsilon/2$, and
\[\frac{ALG(n)}{OPT(n)}<\frac{9.25}{6-\epsilon}+\frac{\epsilon}{2}<\frac{9.25+3\epsilon}{6}+\frac{\epsilon}{2}=\frac{9.25}{6}+\epsilon\] \qed
\end{proof}

\subsubsection{Number of Crossings}
\begin{theorem}[Crossings lower bound]\label{thm:lowerboundcrossings}
Any knight's tour on an $n \times n$ board has at least $4n - O(1)$ crossings.
\end{theorem}

\begin{proof}
Let $T$ be an arbitrary knight's tour on an $n\times n$ board.
We show that $T$ has $n-O(1)$ crossings involving knight moves incident to the cells along the left edge of the board. An analogous argument holds for the three other edges of the board, which add up to the desired bound. 

We partition the edge cells along the left-most column into sets of three consecutive cells, which we call \textit{triplets} (if $n$ is not a multiple of three, we ignore any remaining cells, as they only account for a constant number of crossings). Two triplets are \textit{adjacent} if they contain adjacent cells. Each triplet has six associated knight moves in the tour $T$, two for each of its cells. We call the choice of moves the \textit{configuration} of the triplet. Since there are $\binom{4}{2} = 6$ choices of moves for each cell, there are $6^3 = 216$ possible configurations of each triplet (Figure~\ref{fig:tripletconstruction}, top).

\begin{figure}
	\centering
	\includegraphics[width=0.9\linewidth,page=27]{figures}
	\caption{\textbf{Top:} the 6 choices of moves for an edge cell. \textbf{Bottom left:} two of the $216$ possible triplet configurations. In the graph $G$ in the proof of Theorem~\ref{thm:lowerboundcrossings}, the nodes corresponding to the Triplet configurations $a$ and $b$ have weight 3 and 2, respectively, since the moves incident to their cells have that many crossings (the crossings are marked with white disks). \textbf{Bottom center:} illustration of the edge $a\rightarrow b$ in $G$. It has weight 5. \textbf{Bottom right:} example of a crossing (in red) between cells that are not in adjacent cell pairs.}
	\label{fig:tripletconstruction}
\end{figure}

Consider a weighted directed graph $G$ with a node for each of the $216$ possible triplet configurations and an edge from every node to every node, including a loop from each node to itself. The graph has weights on both vertices and edges. Given a node $v$, let $C(v)$ denote its associated configuration. The weight of $v$ is the number of crossings between moves in $C(v)$. The weight of each edge $v\rightarrow u$ is the number of crossings between moves in $C(v)$ and moves in $C(u)$ when $C(v)$ is adjacent and above $C(u)$ (Figure~\ref{fig:tripletconstruction}, bottom left and center).
Triplets have the property that if two knight moves in $T$ with endpoints in edge cells cross, the edge cells containing the endpoints are either in the same triplet or in adjacent triplets. This is not true for groups of two edge cells instead of three (Figure~\ref{fig:tripletconstruction}, bottom right).

Each path in $G$ represents a choice of move configurations for a sequence of consecutive triplets. Thus, the sum of the weights of the vertices and edges in the path equals the total number of crossings among all of these moves. Since $G$ is finite, any sufficiently long path eventually repeats a vertex. Given a cycle $c$, let $w(c)$ be the sum of the weights of nodes and edges in $c$, divided by the length of $c$. Let $c^*$ be the cycle in $G$ minimizing $w$. Then, $w(c^*)$ is a lower bound on the number of crossings per triplet along the edge.

The cycle minimizing $w$ can be found using Karp's algorithm for finding the minimum mean weight cycle in directed graphs~\cite{karp1978characterization}, which runs in $O(|V|\cdot|E|)$ time in a graph with vertex set $V$ and edge set $E$. However, this requires modifying the graph $G$, as Karp's algorithm is not suitable for graphs that also have node weights. We transform $G$ into a directed graph $G'$ with weights on only the edges and which preserves the optimal solution, as follows. We double each node $v$ in $G$ into two nodes $v_{in}, v_{out}$ in $G'$.  We also add an edge $v_{in} \rightarrow v_{out}$ in $G'$ with weight equal to the weight of $v$ in $G$. For each edge $v\rightarrow u$ in $G$, we add an edge $v_{out}\rightarrow u_{in}$ in $G'$.

Using Karp's algorithm, we found that, for $G$, $w(c^*)=3$. Figure~\ref{fig:mincrossings} shows an optimum cycle $c^*$, which in fact uses only one edge starting and ending at the same triplet configuration.
\end{proof}

\begin{figure}
	\centering
	\includegraphics[width=.26\linewidth,page=24]{figures}
	\caption{The configuration pattern along the board's edge with the minimum number of crossings. The dashed continuations illustrate that the moves in the configuration pattern can be extended to any number of columns without extra crossings.}
\label{fig:mincrossings}
\end{figure}

Since we only count crossings between moves incident to the first column, a question arises of whether the lower bound can be improved by considering configurations spanning more columns (e.g., the two or three leftmost columns). The answer is negative for any constant number of columns. Figure~\ref{fig:mincrossings} shows that the edges can be extended to paths that cover any fixed number of columns away from the edge without increasing the number of crossings.

Combined with the upper bound, we obtain the following approximation ratio.

\begin{corollary}
Algorithm~\ref{alg:main} achieves a $3+o(1)$ approximation on the minimum number of crossings.	
\end{corollary}

\section{Extensions}\label{a:extensions}

The idea of using formation moves to cover the board and junctions to close the tour is quite robust to variations of the problem. We show how this can be done in some of the most popular generalizations of the problem, formalized in the appropriate subsections: higher dimensions, boards with odd width and height and a missing corner, symmetric tours under 90 degree rotations, and tours for a different type of leaper, the $(1,4)$-leaper. In each case, there is a trivial lower bound on the number of turns of $\Omega(n)$, or $\Omega(n^{d-1})$ in the case of $d>2$ dimensions. This is because, in all these settings, the knight must turn after at most $n$ moves. In each case, our construction is within a constant factor of the minimum number of turns.

A variant that we do not consider are torus boards, where opposite edges are connected. This is because there already exists an algorithm that has $O(n)$ turns~\cite{watkins1997torus}, and crossings are not straightforward to define in torus boards. The problem of finding tours with a small number of turns seems easier on a torus board, because one is not forced to make a turn when reaching an edge. Nonetheless, in a square $n\times n$ torus board, $\Omega(n)$ turns are still required, because making $n$ consecutive moves in the same direction brings the knight back to the starting position, so at least one turn is required for each $n$ visited cells. The tours for torus boards by Watkins and Hoenigman~\cite{watkins1997torus} match this lower bound up to constant factors, at least for some board dimensions (see the last section in~\cite{watkins1997torus}).

\subsection{High-dimensional boards}\label{sec:3d}
We extend our technique to three and higher dimensions. In $d$ dimensions, a knight moves by $1$ and $2$ cells along any two dimensions, and leaves the remaining coordinates unchanged. A typical technique to extend a knight's tour algorithm to three dimensions is the ``layer-by-layer'' approach~\cite{watkins2012across,demaio2007chessboards,demaio2011prism}: a 2D tour is reused on each level of the 3D board, adding the minimal required modifications to connect them into a single tour.
We also follow this approach.
(Watkins and Yulan~\cite{yulan2006boxes} consider a generalizations of knight moves where the third coordinate also has a positive offset, but this is not as common.)

We require one dimenson of the board to be even and at least $16$ and another dimension to be at least $12$. We assume, without loss of generality, that these are the first two dimensions. The other dimensions can have any size. Note that at least one dimension must be even, or no tour exists~\cite{erde2012closed}.

For illustration purposes, we start with the 3D case, and later extend it to the general case.
The construction works as follows. The 2D construction from Algorithm~\ref{alg:main} is reused at each level of the third dimension. However, there are only two actual junctions, one on the first level, of height 5, and one on the last level, which may have any of the four heights in Figure~\ref{fig:junctions}. Every other bottom-left or top-right corner is replaced by a sequence of formation moves. At every level except the last, the formation ends adjacent to the corner using one of the sequences of moves in Figure~\ref{fig:3dcorners}. We show sequences for 4 different junction heights, guaranteeing that one shape fits for any sizes of the first two dimensions. The levels are connected with a formation move one level up and two cells to the side, as in Figure~\ref{fig:3djump}. At every level except the first, the formation starts with the rightmost sequence of moves in Figure~\ref{fig:3dcorners}.
A full example is illustrated in Figure~\ref{fig:262626tour}.

\begin{figure}
	\centering
	\includegraphics[width=0.99\linewidth,page=15]{figures}
	\caption{Corners where the knights stay in formation and end at specific positions.}
	\label{fig:3dcorners}
\end{figure}

\begin{figure}
	\centering
	\includegraphics[width=0.4\linewidth,page=16]{figures}
	\caption{Formation move across levels. Each color shows the starting and ending position of one of the knights.}
	\label{fig:3djump}
\end{figure}

\begin{figure}
	\centering
	\includegraphics[width=0.99\linewidth,page=17]{figures}
	\caption{Quartet moves for a $3D$ tour in a $26\times26\times26$ board. The quartet can move from the blue circle at each layer to the orange circle in the next layer by a quartet move.}
	\label{fig:262626tour}
\end{figure}

Since the sequence of moves between junctions is more involved than in two dimensions, Lemma~\ref{lem:seq} may not hold. There is, however, an easy fix: if the entire sequence is not a single cycle, replace the first junction with one that has a vertical matching (second junction in Figure~\ref{fig:junctions}, rotated 180 degrees). This then makes a cycle.

If the number of dimensions is higher than three, simply observe that the same move used between levels can also be used to jump to the next dimension; instead of changing by $1$ the third coordinate, change the fourth. After the first such move, the formation will be at the ``top'' of the second 3D board, which can be traversed downwards. This can be repeated any number of times, and generalizes to any number of dimensions.

In a $n^d$ board, $\Omega(n^{d-1})$ turns are needed, because there are $n^d$ cells and a turn must be made after at most $n/2$ moves. Our construction has $O(n^{d-1})$ turns, as it consists of $n^{d-2}$ iterations of the 2D tour. Thus, it achieves a constant approximation ratio on the minimum number of turns. We do not know of any lower bound on the number of crossings in higher dimensions. 

\subsection{Odd boards}\label{sec:odd}
We show how to construct a tour for a 2D board with odd dimensions which visits every cell except a corner cell. This is used in the next section to construct a tour that is symmetric under $90\degree$ rotations.

Let the board dimensions be $w\times h$, where $w>16$ and $h>12$ are both odd. First, we use Algorithm~\ref{alg:main} to construct a $(w-1)\times h$ tour which is missing the leftmost column. Then, we extend our tour to cover this column, except the bottom cell, with the variations of our construction depicted in Figure~\ref{fig:oddadaptation}. Since the sequence of moves between junctions is different than in the standard construction, Lemma~\ref{lem:seq} may not hold. Therefore, in the bottom-left junction, we might need to have a vertical or a horizontal matching. Figure~\ref{fig:oddadaptation} shows both options. For the top-left corner, recall that we use sequence $(3-h)\mbox{ mod }4$ in Figure~\ref{fig:othercorners}. Here, the height $h$ is odd, so we only need adaptations for Sequences 2 and 0.

Figure~\ref{fig:missingcornerconstruction} shows a complete tour with the adaptations from Figure~\ref{fig:oddadaptation}.

\begin{figure}
	\centering
	\includegraphics[width=0.9\linewidth,page=18]{figures}
	\caption{Adaptations required to add a row to the left of the standard construction, with a missing cell in the junction. \textbf{Left:} two options for the bottom-left junction, one with a vertical matching and one with a horizontal matching. \textbf{Center:} variations of Sequences 0 and 2 that reach one further column to the left. \textbf{Right:} variation for the transitions between diagonals along the left edge of the board that cover an extra column to the left.}
	\label{fig:oddadaptation}
\end{figure}

\begin{figure}
	\centering
	\includegraphics[width=0.7\linewidth,page=29]{figures}
	\caption{A $23\times 23$ tour with a missing corner.}
	\label{fig:missingcornerconstruction}
\end{figure}

\subsection{90 Degree Symmetry}\label{sec:symmetry}

In this section, we show how to construct a symmetric tour under 90 degree rotations.
We say a tour is symmetric under a given geometric operation if the tour looks the same when the operation is applied to the board.

As a side note, our construction is already nearly symmetric under $180\degree$ rotations. For board dimensions such as $30\times 30$ (Figure~\ref{fig:mainconstr}) where opposite corners have the same shape, the only asymmetry is in the internal wiring of the junctions. However, the construction cannot easily be made fully symmetric. It follows from the argument in the proof of Lemma~\ref{lem:seq} that if the two non-junction corners are equal, the entire sequence of formation moves from one junction to the other is neutral. Thus, using the same junction in both corners, as required to have symmetry, would result in two disjoint cycles.

Symmetric tours under $90\degree$ rotations exist only for square boards where the size $n=4k+2$ is a multiple of two but not a multiple of four~\cite{dejter1983symmetry}. 
Parberry~\cite{parberry1997efficient} gives a construction for knight's tours missing a corner cell and then shows how to combine four such tours into a single tour symmetric under $90\degree$ rotations. We follow the same approach  to obtain a symmetric tour with a number of turns and crossings linear on $n$, and thus constant approximation ratios.

In our construction from Section~\ref{sec:odd}, cell $(1,1)$ is missing, and edge  $e=\{(1,2),(3,1)\}$ is present. This suffices to construct a symmetric tour.
Divide the $2n\times 2n$ board into four equal quadrants, each of which is now a square board with odd dimensions. Use the construction for odd boards to fill each quadrant, rotated so that the missing cell is in the center. Finally, connect all four tours as in Figure~\ref{fig:symmetry}.

Figure~\ref{fig:symmetricconstruction} shows a complete tour obtained by connecting four rotated copies of the tour in Figure~\ref{fig:missingcornerconstruction} using the transformation from Figure~\ref{fig:symmetry}.

\begin{figure}
	\centering
	\includegraphics[width=0.9\linewidth,page=19]{figures}
	\caption{This transformation appears in~\cite{parberry1997efficient}. \textbf{Left:} four tours missing a corner square and containing a certain edge. The dashed lines represent the rest of the tour in each quadrant, which cover every square except the dark square. \textbf{Right:} single tour that is symmetric under $90\degree$ rotations. The numbers on the right side indicate the order in which each part of the tour is visited, showing that the tour is indeed a single cycle.}
	\label{fig:symmetry}
\end{figure}

\begin{figure}[t!]
	\centering
	\includegraphics[width=0.99\linewidth,page=30]{figures}
	\caption{A $46\times 46$ symmetric tour under 90 degree rotations.}
	\label{fig:symmetricconstruction}
\end{figure}

\subsection{Giraffe's tour}\label{sec:giraffe}

A giraffe is a leaper characterized by the move $(1,4)$ instead of $(1,2)$.
Giraffe's tours are known to exist on square boards of even size $n\geq 104$~\cite{kamcev2014generalised} and on square boards of size $2n$ when $n$ is odd and bigger than $4$~\cite{dejter1983symmetry}. Our result extends this to some rectangular sizes.

We adapt our techniques for finding giraffe's tours with $O(w+h)$ turns and crossings, where $w$ and $h$ are the width and height of the board.
We use a formation of $4\times 4$ giraffes.
Figure~\ref{fig:giraffemoves} shows the formation moves, Figure~\ref{fig:giraffeheel} shows the analogous of a heel to be used to transition between diagonals, and Figure~\ref{fig:giraffejunctions} shows the two junctions. Figure~\ref{fig:giraffetour} shows how these elements are combined to cover the board.

\begin{figure}
	\centering
	\includegraphics[width=0.6\linewidth,page=20]{figures}
	\caption{Formation of 16 giraffes moving together without leaving any unvisited squares.}
	\label{fig:giraffemoves}
\end{figure}

We restrict our construction to rectangular boards where $w=32k+20$, for some $k\geq 1$, and $h=8l+14$, for some $l\geq 1$. Extending the results to more boards would require additional heel variations. 

We start at the bottom-left junction as in the knight's tour. We transition between diagonals along the bottom edge with a giraffe heel, and along the top edge with a flipped giraffe heel. We transition between diagonals along the left and right edges with four consecutive upward moves. The junction has width 20 and each heel has width 32, so there are $k$ heels along the bottom and top edges. The junction has height 11 and the tip of the heel has height 3, so there are $l$ sequences of four upward moves along each side (e.g., Figure~\ref{fig:giraffetour}).

It is easy to see that the construction visits every cell. As in the case of knight's tours, for the result to be a valid giraffe's tour there must be a single cycle instead of a set of disjoint cycles. 
Note that the matchings in the two junctions in Figure~\ref{fig:giraffejunctions} form a cycle. Thus, if the formation reaches the top-right junction in the same matching as they left the bottom-left junction, the entire construction is a single cycle by an argument analogous to Theorem~\ref{thm:correctness}. Next, we argue that this is the case.

Let $H,F,$ and $U$ denote the sequences of formation moves in the heel, in the flipped heel, and the sequence of four upward moves, respectively. Let $T_{w,h}$ denote the entire sequence of moves from one junction to the other, where $w=32k+20$, for some $k\geq 1$, and $h=8l+14$, for some $l\geq 1$. Note that $T_{w,h}$ is a concatenation, in some order, of $H$ $k$ times, $F$ $k$ times, and $U$ $2l$ times (we can safely ignore diagonal moves, which do not change the coordinates of the giraffes within the formation). Let $M_1$ be the matching of the bottom-left junction. We want to argue that, after all the moves in $T_{w,h}$, the giraffes are still in matching $M_1$, that is, $T_{w,h}(M_1)=M_1$, using the notation from Section~\ref{sec:correctness}.

We show that not only the giraffes arrive to the other junction in the same matching but, in fact, they arrive in the same coordinates in the formation as they started. First, note that $U$ has the effect of flipping column $1$ with $2$ and column $3$ with $4$ in the formation. Perhaps surprisingly, $H$ and $F$ have the same effect. This is tedious but can be checked for each giraffe (Figure~\ref{fig:giraffeheel} shows one in red). Therefore, $T_{w,h}$ is equivalent to $U$ $2(l+k)$ times in a row. Note that after eight consecutive upward moves, or $U$ twice, each giraffe ends where it started. Thus, this is true of the entire tour.

\begin{figure}
	\centering
	\includegraphics[width=0.99\linewidth,page=21]{figures}
	\caption{A giraffe heel. The formation moves are shown with black arrows (grouping up to four sequential straight moves together). The intermediate positions of the formation are marked by rounded squares, showing that every cell is covered. Note that the tip of the heel fits tightly under the next heel. The red line shows the path of one specific giraffe in the formation.}
	\label{fig:giraffeheel}
\end{figure}

\begin{figure}
	\centering
	\includegraphics[width=0.99\linewidth,page=22]{figures}
	\caption{\textbf{Top:} two giraffe junctions. The bottom-left junction consists mostly of formation moves, whereas the top-right one was computed via brute-force search. \textbf{Bottom:} the matching $M_1$ corresponding to the bottom-left junction, the matching $M_2$ corresponding to the top-right junction, and the single cycle created by the union of these matchings. The cycle through the edges of the union is shown with the index of each node.}
	\label{fig:giraffejunctions}
\end{figure}

\begin{figure}
	\centering
	\includegraphics[width=0.99\linewidth,page=23]{figures}
	\caption{\textbf{Top:} the formation moves of a giraffe's tour on a $52\times 30$ board. \textbf{Bottom:} the corresponding giraffe tour.}
	\label{fig:giraffetour}
\end{figure}

\section{Conclusions}\label{sec:conclusions}
We have introduced two new metrics of ``simplicity'' for knight's tours: the number of turns and the number of crossings. We provided an algorithm which is within a constant factor of the minimum for both metrics. In doing so, we found that, in an $n\times n$ board, the minimum number of turns and crossings is $O(n)$.
Prior techniques such as divide-and-conquer necessarily result in $\Theta(n^2)$ turns and crossings, so
at the outset of this work it was unclear whether $o(n^2)$ could be achieved at all.

The ideas of the algorithm, while simple, seem to be new in the literature, which is interesting considering the long history of the problem. Perhaps it was our initial optimization goal that led us in a new direction.
The algorithm exhibits a number of positive traits. It is simple, efficient to compute, parallelizable, and amenable to generalizations (Section~\ref{a:extensions}). 
We conclude with some open questions:
\begin{itemize}
	\item Our tours have $9.25n+O(1)$ turns and $12n+O(1)$ crossings, and we showed respective lower bounds of $(6-\epsilon)n$ and $4n-O(1)$. The main open question is closing or reducing these gaps, as there may still be room for improvement in both directions. We conjecture that the minimum number of turns is at least $8n$.
	\item Are there other properties of knight's tours, besides turns and crossings, that might be interesting to optimize?
	\item Our method relies heavily on the topology of the ``knight-move'' graph. Thus, it is not applicable to general Hamiltonian cycle problems. Are there other graph classes with a similar enough structure that the ideas of formations and junctions can be useful?
\end{itemize}

\bibliography{biblio}
\appendix

\section{Simple Lower Bound for the Number of Turns}\label{a:easylb}
We show that any knight's tour on a $n\times n$ board has at least $4.25n-O(1)$ turns. Consider the cells in one of the two central columns (does not matter which one), and in a row in the range $(n/4,3n/4)$. They are shown in red in Figure~\ref{fig:badlowerbound}. These cells have the property that every maximal sequence of knight moves without turns through them reaches opposite facing edges. The maximal sequences of knight moves through the first and last red cells are shown in dashed lines. Because $n$ must be even, one of the two endpoints of each maximal sequence through a red cell is not an edge cell. It follows that each red cell is part of a sequence of knight's moves that ends in a turn at a non-edge cell. Thus, there is at least one turn at a non-edge cell for each pair of red cells. Since there are $0.5n$ red cells, we get the mentioned lower bound.

\begin{figure}[h!]
	\centering
	\includegraphics[width=0.4\linewidth,page=14]{figures}
	\caption{Setup for the $4.25n-O(1)$ lower bound on the number of turns.}
	\label{fig:badlowerbound}
\end{figure}

\end{document}